\newcommand{\eps}{\varepsilon}
\renewcommand{\Pr}{\operatorname*{\mathbf{Pr}}}
\newcommand{\Ex}{\operatorname*{\mathbf{E}}}
\newcommand{\pigeon}{Pigeonhole Equal Sums\xspace}
\newcommand{\poly}{\operatorname{\mathrm{poly}}}
\newcommand{\polylog}{\poly\log}
\newcommand{\N}{\mathbb{N}}
\newcommand{\Z}{\mathbb{Z}}
\title{A Faster Algorithm for Pigeonhole Equal Sums}
\author{Ce Jin}{MIT}{cejin@mit.edu}{}{Supported by NSF grants CCF-2129139 and CCF-2127597. Work done while visiting the Simons Institute for the Theory of Computing.}
\author{Hongxun Wu}{UC Berkeley}{wuhx@berkeley.edu}{}{Supported by Avishay Tal's Sloan Research Fellowship, NSF CAREER Award CCF-2145474, and Jelani Nelson's ONR grant N00014-18-1-2562.}
\authorrunning{C. Jin and H. Wu} 
\keywords{Subset Sum, Pigeonhole, PPP} 
\begin{document}

 \maketitle

\begin{abstract}
An important area of research in exact algorithms is to solve Subset-Sum-type problems faster than meet-in-middle.   
In this paper we study \emph{Pigeonhole Equal Sums}, a total search problem proposed by Papadimitriou (1994): given $n$ positive integers $w_1,\dots,w_n$ of total sum $\sum_{i=1}^n w_i < 2^n-1$, the task is to find two distinct subsets $A, B \subseteq [n]$ such that $\sum_{i\in A}w_i=\sum_{i\in B}w_i$.

Similar to the status of the Subset Sum problem, the best known algorithm for Pigeonhole Equal Sums  runs in $O^*(2^{n/2})$ time, via either meet-in-middle or dynamic programming (Allcock, Hamoudi, Joux, Klingelh\"{o}fer, and Santha, 2022).

Our main result is an improved algorithm for Pigeonhole Equal Sums in $O^*(2^{0.4n})$ time. We also give a polynomial-space algorithm in $O^*(2^{0.75n})$ time. 
Unlike many previous works in this area, our approach does not use the representation method, but rather exploits a simple structural characterization of input instances with few solutions. 
\end{abstract}

\section{Introduction}

The Subset Sum problem is an important NP-hard problem in computer science: given positive integers $w_1,w_2,\dots,w_n$ and a target integer $t$, find a subset $A\subseteq [n]$ such that $\sum_{i\in A}w_i=t$. Subset Sum can be solved in $O(2^{n/2})$ time by a simple meet-in-middle algorithm \cite{horowitz1974computing}, and an important open problem is to improve it to $O(2^{(1/2-\eps)n})$.
A long line of research attempts to solve Subset Sum faster using the representation method \cite{hgj10} and connections to uniquely decodable code pairs \cite{AKKN16,AustrinKKN18,Tilborg78}, but these techniques have so far only succeeded on average-case inputs \cite{hgj10,BeckerCJ11,asiacrypt} or restricted classes of inputs \cite{AustrinKKN15,AKKN16}.
Nevertheless, significant progress has been made for other variants of Subset Sum, including 
 Equal Sums \cite{esaequal19}, 2-Subset Sum and Shifted Sums \cite{esa22} and more general subset balancing problems \cite{soda22}, as well as Subset Sum in other computational settings such as Merlin--Arthur protocols \cite{Nederlof17}, low-space algorithms \cite{BansalGN018,NederlofW21}, quantum algorithms \cite{esa22}, and algorithms with lower-order run time improvements \cite{randomapprox}.
The general hope is that the tools developed for solving these variant problems might one day help solve the original Subset Sum problem. 

In this paper we study an interesting variant of Subset Sum called \emph{Pigeonhole Equal Sums}:
\newcommand{\defproblem}[3]{
  \vspace{2mm}
  \vspace{1mm}
\noindent\fbox{
  \begin{minipage}{0.95\textwidth}
  #1 \\
  {\bf{Input:}} #2  \\
  {\bf{Output:}} #3
  \end{minipage}
  }
  \vspace{2mm}
}

\defproblem{\textsc{Pigeonhole Equal Sums} \cite{Papadimitriou94}}
{positive integers $w_1,w_2,\dots,w_n$, with promise $\sum_{i=1}^n w_i < 2^n - 1$.
}
{two different subsets $A,B\subseteq [n]$ such that $\sum_{i\in A} w_i= \sum_{i\in B} w_i$.
}
\\
Since there are $2^n$ subsets $S\subseteq [n]$ with only $2^{n}-1$ possible subset sums $\sum_{i\in S}w_i \in \{0,1,\dots,2^{n}-2\}$ due to the promise, the pigeonhole principle guarantees that there exists a pair of subsets with the same subset sum.

Pigeonhole Equal Sums was introduced by Papadimitriou \cite{Papadimitriou94} as a natural example problem in the total search complexity class PPP.  
This problem has received attention in the TFNP literature \cite{BanJPPR19,SotirakiZZ18}, and is conjectured to be PPP-complete~\cite{Papadimitriou94}. 

From the algorithmic point of view, the current status of Pigeonhole Equal Sums is quite similar to that of the Subset Sum problem: a simple binary search with meet-in-middle solves Pigeonhole Equal Sums in $O^*(2^{n/2})$ time (see \cref{sec:prelim}).\footnote{We use $O^*(\cdot)$ to hide $\poly(n)$ factors.}
Allcock, Hamoudi, Joux, Klingelh\"{o}fer, and Santha~\cite[Theorem 6.2]{esa22} gave another $O^*(2^{n/2})$-time algorithm based on dynamic programming (which is analogous to the alternative $O^*(2^{n/2})$-time Subset Sum algorithm  from \cite{AKKN16}\footnote{See also \url{https://youtu.be/cHimhXXIwcg?t=454}.}). 
It remains open whether $O(2^{(1/2-\eps)n})$ time is possible for Pigeonhole Equal Sums.
Improvement of such type was achieved for the Equal Sums problem (without the pigeonhole promise) by Mucha, Nederlof, Pawlewicz, and W\k{e}grzycki \cite{esaequal19} via the representation method with $O(3^{(1/2-\eps)n})$ run time for some $\eps>0.01$, but this result has no direct implications for Pigeonhole Equal Sums (for which the known $O^*(2^{n/2})$ time bound is already much better than $O(3^{n/2})$).

\subsection{Our results}
We give an algorithm that solves Pigeonhole Equal Sums faster than the previous $O^*(2^{n/2})$ running time \cite{esa22}.
\begin{theorem}[Main]
   \label{thm:main}
Pigeonhole Equal Sums can be solved by a randomized algorithm in $O^*(2^{0.4n})$  time.
\end{theorem}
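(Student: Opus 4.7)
The plan is to build a dichotomy based on the number $D := |\{\sum_{i \in A} w_i : A \subseteq [n]\}|$ of distinct subset sums. Let $c_s = |\{A \subseteq [n] : \sum_{i \in A} w_i = s\}|$. Since $\sum_s c_s = 2^n$ and $c_s$ is supported on at most $D$ values, Cauchy--Schwarz gives $\sum_s c_s^2 \geq 2^{2n}/D$, so two independent uniformly random subsets collide in sum with probability at least $1/D$. Hence if $D \leq 2^{0.8n}$, I would draw $O^*(2^{0.4n})$ uniform random subsets of $[n]$, hash each by its weight-sum, and output the first repeated sum; a standard birthday argument gives success with high probability in time $O^*(2^{0.4n})$.

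It remains to handle the complementary case $D > 2^{0.8n}$, where almost every subset sum is distinct. The core of the approach would be a structural lemma asserting that an instance with this many distinct sums must be close to super-increasing: after sorting $w_1 \leq \cdots \leq w_n$, there is an index set $S \subseteq [n]$ of size $|S| \geq 0.6 n$ on which $(w_i)_{i \in S}$ is super-increasing, or at least satisfies a weaker ``signed sums are separated'' condition sufficient to invert in polynomial time. The intuition is that every index at which super-increasingness fails should contribute a constant-factor-many extra colliding pairs, so lower-bounding the number of collisions by $2^n - D$ and matching against $D > 2^{0.8n}$ forces the ``defective'' index set to have size at most roughly $0.4 n$. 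Given such an $S$, the algorithm enumerates all $\leq 2^{|[n] \setminus S|} \leq 2^{0.4n}$ subsets of the complement together with their sums, and for each uses the super-increasing structure of $w|_S$ to recover, in $\poly(n)$ time, a matching signed combination inside $S$, thereby assembling a colliding pair $A \neq B$ with $\sum_{i \in A} w_i = \sum_{i \in B} w_i$.

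The main difficulty is proving the structural lemma at exactly the right quantitative threshold. A crude argument only yields weak global consequences such as $\sum w_i \geq D-1$, whereas what is needed is per-index, geometric-growth control --- say $w_{i_k} \geq 2^k$ along an index subsequence of length $\geq 0.6 n$ --- so that the inversion step runs in $\poly(n)$ per candidate shift from the complement. Balancing the ``defect size'' of the structural lemma against the sampling threshold of the easy branch so that both branches meet at $2^{0.4n}$, and ensuring that the structured side can truly be inverted efficiently per shift (rather than only by a further $\smash{2^{|S|/2}}$ meet-in-the-middle, which would be disastrous), is where the combinatorial and algorithmic work concentrates.
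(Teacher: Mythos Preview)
Your dichotomy is the right shape, but the threshold is miscalibrated and the birthday branch is too weak to meet the structural branch. The paper also splits on the parameter $d := 2^n - D = \sum_t \max\{0,c_t-1\}$, but sets the cutoff at $d = 2^{0.8n}$, which corresponds to $D = 2^n - 2^{0.8n}$, not $D = 2^{0.8n}$. Your birthday argument is correct when $D \le 2^{0.8n}$, but that is a tiny extreme. At the paper's actual crossover $d = 2^{0.8n}$ one can have all $c_t \in \{0,1,2\}$, so $\sum_t c_t^2 = 2^n + 2d \approx 2^n$ and birthday sampling needs $\Theta(2^{n/2})$ draws --- no improvement. The paper handles the entire ``many collisions'' regime $d \ge 2^{0.8n}$ with a mod-$p$ dynamic program plus subsampling, achieving $O^*((2^{2n}/d)^{1/3})$; the cube root rather than the square root is exactly what allows the two branches to balance at $2^{0.4n}$.

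On the structured side, your proposed lemma --- that $D > 2^{0.8n}$ alone forces a super-increasing subset of size $\ge 0.6n$ --- is far stronger than anything the paper proves, and your own heuristic (``each defective index contributes constant-factor-many extra collisions, so bound defects by $2^n - D$'') uses an upper bound on $d$ that is essentially vacuous in this range. The paper's structural result needs the much tighter hypothesis $d \le 2^{0.8n}$ and yields only $w_i = 2^{i-1} + O(i\cdot 2^{0.8n})$, which is informative just for the top $\approx 0.2n$ indices. Crucially, the paper does not try to invert: it uses this rigidity to answer counting queries $\#\{S : w(S)\le T\}$ in $O^*(2^{0.4n})$ time (meet-in-middle on the bottom $\approx 0.8n$ weights, with the structured top contributing only $O(1)$ candidates), and then binary-searches for a colliding value. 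Your per-shift inversion scheme is additionally fragile because a generic super-increasing set has gaps in its signed-sum range, so a given complement shift need not be representable at all.
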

Surprisingly, unlike previous works on other variants of Subset Sum, our algorithm does not use the representation method \cite{hgj10} or tools from coding theory \cite{AKKN16,AustrinKKN18,Tilborg78}. Instead, our main insight is a simple structural characterization of Pigeonhole Equal Sums instances with few solutions.

 Our techniques also yield a fast polynomial-space algorithm for Pigeonhole Equal Sums, in an analogous way to the previous $O(3^{(1-\eps)n})$-time polynomial-space algorithm for Equal Sums \cite{esaequal19}.
\begin{theorem}
   \label{thm:mainlowspace}
Pigeonhole Equal Sums can be solved by a randomized algorithm in $O^*(2^{0.75n})$ time and $\poly(n)$ space. 
\end{theorem}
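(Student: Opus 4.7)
I would prove this by adapting the solution-count dichotomy underlying \cref{thm:main} to a polynomial-space setting. Let $T$ denote the number of unordered pairs $\{A,B\}$ of distinct subsets $A,B\subseteq [n]$ with $\sum_{i\in A} w_i = \sum_{i\in B} w_i$. The pigeonhole promise forces $T\ge 1$, but $T$ can vary widely across instances. The plan is to fix a threshold $T^{\star} = 2^{n/2}$ and handle the two regimes $T\ge T^{\star}$ and $T<T^{\star}$ separately, using polynomial-space analogues of the subroutines from \cref{thm:main} and balancing so that both regimes cost $O^*(2^{0.75n})$.

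\emph{Many-solutions regime ($T\ge T^{\star}$).} Here I would find an equal-sum pair by a birthday-style collision search. Sampling $k$ i.i.d.\ uniformly random subsets of $[n]$, the expected number of equal-sum pairs among them is $\binom{k}{2}\,T/2^{2n-1}$, which becomes $\Omega(1)$ once $k = \Omega(2^n/\sqrt{T})$. To avoid storing all samples, I would run a Pollard-rho / Floyd cycle-detection routine on a function $f:\{0,1\}^n\to\{0,1\}^n$ that maps a subset $S$ to a pseudorandom subset determined by $\sum_{i\in S}w_i$ (e.g., via hashing the sum with a fresh random seed). A collision $f(S)=f(S')$ with $S\neq S'$ then, with constant probability, certifies $\sum_{i\in S}w_i=\sum_{i\in S'}w_i$. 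Standard rho analysis finds such a collision in $O^*(2^n/\sqrt{T})$ time and $\poly(n)$ space; for $T\ge 2^{n/2}$ this is $O^*(2^{0.75n})$.

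\emph{Few-solutions regime ($T<T^{\star}$).} Here I would invoke a suitable form of the structural characterization underlying \cref{thm:main}: instances with few equal-sum pairs admit a short combinatorial description, informally, an index set $I\subseteq [n]$ of size at most $0.75\,n$ such that every equal-sum pair is determined by its restriction to $I$ together with a polynomial-time completion on $[n]\setminus I$ (e.g., a greedy bit-by-bit matching on the ``structured'' complementary part). Enumerating all $2^{|I|}$ subsets of $I$ one at a time and invoking the completion step for each yields a $\poly(n)$-space algorithm running in $O^*(2^{0.75n})$ time. The threshold $T^{\star} = 2^{n/2}$ is chosen precisely to equalize the cost of the two regimes.

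\emph{Main obstacle.} The main difficulty is verifying that the few-solutions structural lemma from \cref{thm:main} can be used in a purely streaming / enumerative fashion, without the auxiliary tables of partial sums that the exponential-space algorithm of \cref{thm:main} likely relies on; this may require re-deriving the lemma in a form where the completion step on $[n]\setminus I$ is explicitly polynomial-time per subset of $I$, rather than amortized via a preprocessed table. A secondary issue is the familiar rho pitfall of ensuring that a non-negligible fraction of $f$-collisions correspond to genuinely distinct subsets with the same weight sum (rather than trivial fixed points of the hash), but this is handled by standard salting and re-randomization within the Floyd/Brent framework.
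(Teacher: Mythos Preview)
Your high-level dichotomy (many solutions $\Rightarrow$ rho-style collision search; few solutions $\Rightarrow$ structural lemma) is exactly what the paper does, but both halves of your plan have real gaps as written.

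\textbf{Many-solutions regime.} The claim that ``standard rho analysis finds such a collision in $O^*(2^n/\sqrt{T})$ time and $\poly(n)$ space'' is the \emph{heuristic} birthday bound, and it is not known to hold rigorously. The best rigorous polylog-space element-distinctness guarantee (BCM13, with the random-oracle assumption removed by Chen--Jin--Williams--Wu and Lyu; this is \cref{theorem:bcm} in the paper) gives time $O^*\bigl(N\sqrt{F_2}/(F_2-N)\bigr)$ with $N=2^n$ and $F_2-N=2T$. For $T\ll 2^n$ this is $O^*(2^{1.5n}/T)$, which is worse than $2^n/\sqrt{T}$ by a factor $\sqrt{2^n/T}$. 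At your threshold $T^\star=2^{n/2}$ the rigorous bound is only $O^*(2^n)$, not $O^*(2^{0.75n})$. The paper therefore sets the threshold at $2^{0.75n}$ (on the parameter $d$, which satisfies $d\le 2T$), so that \cref{theorem:bcm} gives $O^*(2^{1.5n}/2^{0.75n})=O^*(2^{0.75n})$; see \cref{lem:largedlowspace}.

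\textbf{Few-solutions regime.} Your proposed structural statement --- an index set $I$ with $|I|\le 0.75n$ such that ``every equal-sum pair is determined by its restriction to $I$ together with a polynomial-time completion on $[n]\setminus I$'' --- is not what the paper proves, and I do not see how one would prove it. The paper's structure is different and more concrete: when $d\le\Delta$, every input satisfies $|w_i-2^{i-1}|\le i\Delta$ (\cref{eqn:precise}), so the large-index inputs are nearly powers of two. The algorithm is then the binary search of \cref{sec:prelim}, where each count $\#\{S:w(S)\le T\}$ is computed by splitting $[n]$ into $A=\{1,\dots,i^*\}$ with $2^{i^*}\approx n^2\Delta$ and $B=[n]\setminus A$; the subset sums of $B$ lie within $n^2\Delta$ of the arithmetic progression $\{k\cdot 2^{i^*}\}$, so at most two choices of $B'\subseteq B$ are relevant for a given $T$, and one brute-forces over the $2^{|A|}=O^*(\Delta)$ subsets of $A$ in $\poly(n)$ space (\cref{lem:smalldeltalowspace}). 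There is no per-subset ``completion'' step that outputs an equal-sum pair; the structure is used to \emph{count}, and the pair comes from the outer binary search.
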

For comparison, a straightforward algorithm based on binary search solves Pigeonhole Equal Sums in $\poly(n)$ space and $O^*(2^n)$ time (see the beginning of \cref{sec:poly}).

\cref{thm:main} and \cref{thm:mainlowspace} will be proved in \cref{sec:mainalgo} and \cref{sec:poly} respectively.

\section{Preliminaries}
\label{sec:prelim}

Denote $[n]=\{1,\dots,n\}$.
Let $O^*(\cdot), \Omega^*(\cdot)$ hide $\poly(n)$ factors, where $n$ is the number of input integers in the \pigeon problem.

 Denote $w(A) = \sum_{i\in A} w_i$ for $A\subseteq [n]$. 
The pigeonhole promise states $w([n])< 2^n -1$.

For a predicate $p$ we define $\mathbf{1}[p] =1$ if $p$ is true and $\mathbf{1}[p] =0$ if $p$ is false.

 We need the following well-known lemma.
\begin{lemma}[Counting subset sums via meet-in-middle \cite{horowitz1974computing}]
   \label{lem:mim}
   Given integers $w_1,\dots,w_n$ and $t$, we can compute $\#\{S\subseteq [n]:w(S) \le t\}$ in $O^*(2^{n/2})$ time. Moreover, we can list $S\subseteq [n]$ such that $w(S) \le t$ in $O^*(1)$ additional time per $S$.
\end{lemma}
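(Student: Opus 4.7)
The plan is to use the classical meet-in-the-middle approach. Split $[n]$ into two halves $L=\{1,\dots,\lceil n/2\rceil\}$ and $R=[n]\setminus L$, each of size at most $\lceil n/2\rceil$. Enumerate all subsets of $L$ and all subsets of $R$ together with their sums, producing two lists of size at most $2^{\lceil n/2 \rceil}=O^*(2^{n/2})$. Sort the $R$-list by subset sum.

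For the counting task, write
\[
\#\{S\subseteq [n] : w(S)\le t\} \;=\; \sum_{A_L\subseteq L}\#\{A_R\subseteq R : w(A_R)\le t-w(A_L)\}.
\]
Each inner count is a prefix count in the sorted $R$-list and can be answered in $O(\log 2^{n/2})=O(n)$ time by binary search. Summing over the $O^*(2^{n/2})$ choices of $A_L$ gives total time $O^*(2^{n/2})$.

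For the listing task, iterate over $A_L\subseteq L$; for each one, binary-search the sorted $R$-list to find the prefix of $A_R$'s with $w(A_R)\le t-w(A_L)$, and stream out $S=A_L\cup A_R$ for each such $A_R$. The outer loop and binary searches cost $O^*(2^{n/2})$ in total (absorbed into the preprocessing bound), and each emitted $S$ takes only $O(n)=O^*(1)$ additional time to write down.

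The method is entirely standard, so there is no serious obstacle; the one point to be careful about is the phrasing ``$O^*(1)$ additional time per $S$''. This is meant in the amortized sense: the $O^*(2^{n/2})$ work spent on enumerating $L$ and binary-searching $R$ is counted as part of the baseline $O^*(2^{n/2})$ preprocessing, while the incremental cost of each output subset is only the $O(n)$ time to assemble and emit $A_L\cup A_R$. This matches the statement of the lemma.
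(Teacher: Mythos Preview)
Your proof is correct and follows essentially the same meet-in-the-middle approach as the paper: split $[n]$ into two halves, enumerate and sort the subset sums of each half, and for each left-half sum use a prefix query in the sorted right-half list. The paper's write-up is slightly terser (it just says ``it is easy to augment this algorithm to support listing''), but the content is identical.
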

\begin{proof}
   Divide $[n]$ into $S_1=\{1,\dots,\lfloor n/2\rfloor \}$ and $S_2=[n]\setminus S_1$, and every subset $S\subseteq [n]$ can be represented as $X\uplus Y, X\subseteq S_1,Y\subseteq S_2$. Compute and sort the two lists $A = \{w(X)\}_{X\subseteq S_1}$ and $B = \{w(Y)\}_{Y\subseteq S_2}$ of length $O(2^{n/2})$ each. Then for each $w(X)\in A$ we accumulate $| B\cap (-\infty, t-w(X)]|$ to the answer. It is easy to augment this algorithm to support listing.
\end{proof}

\paragraph*{Pigeonhole Equal Sums via binary search}
The following simple binary-search algorithm (described in \cite[Remark 6.9 of arXiv version]{esa22} and attributed to an anonymous referee) solves \pigeon in $O^*(2^{n/2})$ time: Maintain an interval $\{\ell,\ell+1,\dots,r\}$ (initialized to $\ell=0,r=2^{n}-2$)  that satisfies the pigeonhole invariant $r-\ell+1 < \#\{S\subseteq [n]: \ell\le w(S)\le r\}$.
Initially this invariant is satisfied due to $w([n])\le 2^n -2$.
While $r>\ell$, pick the middle point $m = \lfloor \tfrac{\ell + r}{2} \rfloor$, and use meet-in-middle (\cref{lem:mim}) to compute $c_1 = \#\{S\subseteq [n]: \ell\le w(S)\le m\}$ and $c_2 = \#\{S\subseteq [n]: m+1\le w(S)\le r\}$ in $O^*(2^{n/2})$ time. Then we shrink the interval to $\{\ell,\dots,m\}$ if $m-\ell+1< c_1$, or to $\{m+1,\dots,r\}$ if $r-m< c_2$ (the invariant guarantees that at least one holds). After $\lceil \log_2(2^n-1)\rceil = n$ iterations we shrink to a singleton interval $\ell=r$. By the invariant, there exist two different $S_1,S_2\subseteq [n]$ such that $w(S_1)=w(S_2)=\ell$, and we can report such $S_1,S_2$ using meet-in-middle (\cref{lem:mim}).

This binary-search strategy will be used in our improved algorithms as well.

\section{The improved algorithm}
\label{sec:mainalgo}
Let the $n$ input integers be sorted as $0<w_1< w_2< \dots< w_n$ (assuming no trivial solution $w_i= w_j$ exists).
\paragraph*{An assumption on prefix sums}
If any proper prefix $\{w_1,\dots,w_i\}$ ($i\le n-1$) already satisfies the pigeonhole promise $w([i])< 2^i-1$, then we can instead 
solve the smaller \pigeon instance $\{w_1,\dots,w_i\}$ and obtain $A,B\subseteq [i], A\neq B$ with $w(A) =w(B)$. Hence, without loss of generality we assume such prefix does not exist, i.e.,
\begin{equation}
   \label{eqn:lb}
   w([i])\ge 2^i-1
 \text{ for all } i\in [n-1].
\end{equation}

\paragraph*{Frequencies $f_t$ and parameter $d$}
The \emph{frequency} (also called bin size) of $t\in \N$ is the number of input subsets achieving sum~$t$, denoted as
$   f_t = \# \{ S\subseteq [n]: w(S) = t\}$. 
Since $w([n])<2^n-1$, we know $f_t = 0$ for all $t\ge 2^n-1$, and
\begin{equation}
   \label{eqn:allsol}
   \sum_{0\le t<2^n}f_t = 2^n.
\end{equation}

Two different subsets achieving equal subset sum $t$ imply $f_t> 1$. This motivates the following parameter,
\begin{equation}
   \label{eqn:defndel}
 d = \sum_{0\le t< 2^n} \max\{0,f_t-1\}, 
\end{equation}
which counts the (non-trivial) equality relations among  all the $2^n$ subset sums.
Using \cref{eqn:allsol}, we can rewrite \cref{eqn:defndel} as $d=\sum_{0\le t<2^n}(f_t - \mathbf{1}[f_t\ge 1]) = 2^n - \sum_{0\le t<2^n}\mathbf{1}[f_t\ge 1]$, and thus obtain
\begin{equation}
   \label{eqn:zeros}
 d = \#\{0\le t<2^n:  f_t=0 \},
\end{equation}
which counts the non-subset-sums in $\{0,1,\dots,2^{n}-1\}$. In particular, $d<2^n$.

The equivalence between \cref{eqn:defndel} and \cref{eqn:zeros} is powerful.
In the following we will give two different algorithms for \pigeon. The first one works for small $d$ by analyzing the structure of input instances with few non-subset-sums (by \cref{eqn:zeros}). The second one works when $d$ is large and hence there are many solutions (by \cref{eqn:defndel}) which allow a subsampling approach. These two algorithms are summarized as follows:
\begin{lemma}
   \label{thm:smalldelta}
   Given parameter $\Delta \le  2^n/(3n^2)$,  \pigeon with $d\le \Delta$ can be solved deterministically 
  in $O^*(\sqrt{\Delta})$ time.
\end{lemma}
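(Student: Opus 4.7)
The plan is to exploit the following structural observation. Let $A_k = \{w(S) : S \subseteq [k]\}$ be the set of subset sums of the prefix $[k]$, and set $\alpha_k = 2^k - |A_k|$. Since $A_k = A_{k-1} \cup (A_{k-1} + w_k)$, we have $|A_k| \le 2|A_{k-1}|$, hence $\alpha_k \ge 2\alpha_{k-1}$. Iterating downward from the base case $\alpha_n = d \le \Delta$ gives $\alpha_k \le \Delta/2^{n-k}$, and since $\alpha_k$ is a non-negative integer this forces $\alpha_k = 0$ (so all $2^k$ prefix subset sums are distinct) whenever $k < n - \log_2 \Delta$. Let $k^*$ denote the smallest index with $|A_{k^*}| < 2^{k^*}$; the above shows $k^* > n - \log_2 \Delta$, so the ``tail'' $\{k^*, \dots, n\}$ has size at most $\lceil \log_2 \Delta \rceil + 1$ and contains only $O(\Delta)$ subsets.

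Given this, my plan for the algorithm has two parts: locate $k^*$ and exhibit the witness collision at level $k^*$. The collision at $k^*$ necessarily involves $w_{k^*}$: it has the form $S' \leftrightarrow S \cup \{k^*\}$ for some $S, S' \subseteq [k^*-1]$ with $w(S') - w(S) = w_{k^*}$, equivalently $w_{k^*} \in A_{k^*-1} - A_{k^*-1}$. Because $A_{k^*-1}$ is a set of $2^{k^*-1}$ distinct sums packed into $[0, w([k^*-1])]$ while \cref{eqn:lb} forces $w([k^*-1])$ to be very close to $2^{k^*-1}-1$, a sharper structural analysis should yield that $A_{k^*-1}$ essentially coincides with the initial interval $\{0, 1, \dots, 2^{k^*-1}-1\}$. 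This in turn would force $w_i$ to be close to $2^{i-1}$ for $i < k^*$ and allow a greedy, binary-representation-style decomposition of $w_{k^*}$ as $w(S') - w(S)$ in $\poly(n)$ time, producing the collision. Meanwhile, $k^*$ itself is located by applying meet-in-middle (\cref{lem:mim}) to the short tail of size $O(\log \Delta)$, enumerating $O(\sqrt{\Delta})$ sums on each half.

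The main obstacle is combining these ingredients while keeping the runtime at $O^*(\sqrt{\Delta})$: a naive approach that explicitly materializes the $O(\Delta)$ tail sums and then does closest-pair against the head would already cost $O^*(\Delta)$. The plan is to keep the tail sums only implicitly as the sumset $B_1 + B_2$ with $|B_1|, |B_2| = O(\sqrt{\Delta})$, and to exploit the (near-)interval description of $A_{k-1}$ to answer the required bichromatic near-pair queries in amortized $O^*(\sqrt{\Delta})$ time across all candidate levels. Rigorously establishing the near-interval structure of $A_{k^*-1}$ --- beyond what the crude bound $\alpha_k \le \Delta/2^{n-k}$ alone gives --- is the technically most delicate step, and it is precisely here that the $1/(3n^2)$ slack in the hypothesis $\Delta \le 2^n/(3n^2)$ is expected to be consumed.
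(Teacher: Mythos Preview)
Your doubling argument for $\alpha_k = 2^k - |A_k|$ is correct and yields $k^* > n - \log_2\Delta$, but the algorithmic plan built on it has genuine gaps. Knowing $|A_{k^*-1}| = 2^{k^*-1}$ does \emph{not} by itself place $A_{k^*-1}$ near the interval $\{0,\dots,2^{k^*-1}-1\}$: \cref{eqn:lb} is only a \emph{lower} bound on $w([i])$, and your $\alpha_k$ analysis supplies no matching upper bound on the individual $w_i$. The missing ingredient is the pointwise estimate $w_i \le 2^{i-1}+\Delta$, which requires a separate counting argument (at most $2^{i-1}$ subsets have sum below $w_i$ since any such subset lies in $[i-1]$, yet at least $w_i - d$ values below $w_i$ are hit). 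Even granting $|w_i - 2^{i-1}| \le n\Delta$, this bound is vacuous for the first $i^* \approx \log_2(n^2\Delta)$ indices, so no greedy binary-representation decomposition can reach all the way down; one must fall back to meet-in-middle on the unstructured block $[i^*]$, and that is already where the $O^*(\sqrt{\Delta})$ cost lives. Finally, ``locating $k^*$ by meet-in-middle on the tail'' does not type-check: deciding whether $|A_k| = 2^k$ is a distinctness question about the \emph{head} $[k]$, which can have size close to $n$.

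The paper's route sidesteps all of this. It proves $w_i - 2^{i-1} \in [-i\Delta, \Delta]$ via the counting argument above, splits at the smallest $i^*$ with $2^{i^*} \ge 3n^2\Delta$, and observes that the subset sums of $B = \{i^*+1,\dots,n\}$ lie within $n^2\Delta$ of the arithmetic progression $\{k \cdot 2^{i^*}: 0\le k < 2^{n-i^*}\}$. Hence for any threshold $T$ at most two choices of $B' \subseteq B$ are borderline; for each, meet-in-middle on $[i^*]$ counts the valid $A' \subseteq [i^*]$ in $O^*(2^{i^*/2}) = O^*(\sqrt{\Delta})$ time. This computes $\sum_{t \le T} f_t$ in $O^*(\sqrt{\Delta})$ per query, after which the standard pigeonhole binary search from \cref{sec:prelim} finishes. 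There is no need to find $k^*$ or to argue that any $A_k$ is an interval.
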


\begin{lemma}
   \label{thm:larged}
Given parameter $2^{n/2} \le \Delta < 2^n$, \pigeon with $d\ge \Delta$ can be solved in $O^*((2^{2n}/\Delta)^{1/3})$ time by a randomized algorithm.
\end{lemma}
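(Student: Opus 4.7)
The plan is a randomized subsampling algorithm in the style of Wagner's 4-list $k$-tree technique, exploiting that $d\ge\Delta$ implies $\sum_t f_t(f_t-1)\ge 2\Delta$, so there are at least $2\Delta$ non-zero signed indicators $\delta\in\{-1,0,1\}^n$ with $\sum_i w_i\delta_i=0$. Each such $\delta$ yields a disjoint colliding pair $A=\{i:\delta_i=+1\}$, $B=\{i:\delta_i=-1\}$ with $w(A)=w(B)$ and $A\ne B$, so it suffices to find any such non-zero $\delta$.

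Set $M:=\lceil(2^{2n}/\Delta)^{1/3}\rceil$; note $2^{n/3}\le M\le 2^{n/2}$ under the hypothesis $2^{n/2}\le\Delta<2^n$. I would partition $[n]$ into four equal parts $P_1,\dots,P_4$, draw $M$ independent uniformly random samples $\delta^{(k,j)}\in\{-1,0,1\}^{P_k}$ for each $k\in[4]$, compute the partial sums $x^{(k,j)}=\sum_{i\in P_k} w_i\delta^{(k,j)}_i$, and run a Wagner 4-list zero-sum search with a random prime modulus $p\sim M$. The first round filters $L_1'\times L_2'$ (and symmetrically $L_3'\times L_4'$) to pairs summing to $0\pmod p$, producing expected sizes $M^2/p=M$; the second round meets the two filtered lists in $\tilde O(M)$ time. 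A non-zero quadruple lifts to a colliding pair which we return, giving total runtime $O^*(M)=O^*((2^{2n}/\Delta)^{1/3})$.

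For the success analysis, each fixed non-zero zero-sum $\delta$ survives subsampling with probability $(M/3^{n/4})^4=M^4/3^n$, independently across the four parts. Summing over the $\ge 2\Delta$ such $\delta$'s, the expected number of surviving solutions is $\Omega(\Delta M^4/3^n)$, which is $\Omega(1)$ for $M=(2^{2n}/\Delta)^{1/3}$ provided the $2\Delta$ solutions are not adversarially concentrated on vanishingly small support. A second-moment calculation then yields $\Omega(1)$ success probability per trial, and $O(\log n)$ independent repetitions boost this to $1-o(1)$.

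The main obstacle is exactly this worst-case concentration: if essentially all $2\Delta$ solutions have Hamming weight bounded by some small $u_0$, then they lie within only $\binom{n}{u_0}\cdot 2^{u_0}$ candidate vectors and the random subsampling misses them with high probability. I would handle this by branching on the support size: for $u\le u_0$ enumerate zero-sum $\delta$'s of Hamming weight $u$ directly by meet-in-middle on the $\binom{n}{u}\cdot 2^u$ candidates (polynomial when $u_0=O(1)$, and in general $O^*(n^{u_0})$), while for $u>u_0$ the subsampling argument applies since the survival probability $M^4/3^n$ is accompanied by a correspondingly larger count of solutions. Choosing the threshold $u_0$ so that both branches run in $O^*((2^{2n}/\Delta)^{1/3})$ time yields the stated randomized bound.
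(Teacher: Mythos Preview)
Your very first inference is wrong, and the rest of the argument collapses without it. From $\sum_t f_t(f_t-1)\ge 2\Delta$ you conclude that there are at least $2\Delta$ non-zero $\delta\in\{-1,0,1\}^n$ with $\sum_i w_i\delta_i=0$. But $\sum_t f_t(f_t-1)$ counts ordered pairs $(A,B)$ with $A\ne B$ and $w(A)=w(B)$, and the map $(A,B)\mapsto\delta$ (via $\delta_i=\mathbf 1[i\in A]-\mathbf 1[i\in B]$) is $2^{\,n-\|\delta\|_0}$-to-one: any subset of $\{i:\delta_i=0\}$ can be added to both $A$ and $B$. The correct identity is
\[
\sum_{\substack{\delta\neq 0\\ \sum_i w_i\delta_i=0}} 2^{\,n-\|\delta\|_0}\;=\;\sum_t f_t(f_t-1)\;\ge\;2\Delta,
\]
which does \emph{not} lower-bound the number of zero-sum $\delta$'s. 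Concretely, a single low-weight relation such as $w_1+w_2=w_3$ (with the remaining $w_i$ generic) yields only two non-zero zero-sum $\delta$'s, yet $d=\Theta(2^n)$. So your ``$\ge 2\Delta$ solutions'' premise can fail by an exponential factor, and the expected-survivors calculation $\Omega(\Delta\, M^4/3^n)$ is unsupported. Your proposed ``branch on Hamming weight $u_0$'' patch does not fix this either: in the example above \emph{all} zero-sum $\delta$'s have weight $3$, so the subsampling branch still has essentially nothing to hit, while the enumeration branch becomes the whole algorithm and you have given no time analysis for it.

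There is a second, independent problem with the parameter regime. You set $M=\lceil (2^{2n}/\Delta)^{1/3}\rceil$ and sample $M$ vectors from $\{-1,0,1\}^{n/4}$, a set of size $3^{n/4}\approx 2^{0.396n}$. But for $\Delta\le 2^{0.81n}$ (in particular for the balancing value $\Delta=2^{0.8n}$) you already have $M\ge 3^{n/4}$, so you are sampling more points than the space contains; the ``survival probability'' $(M/3^{n/4})^4$ then exceeds $1$ and the whole Wagner-style accounting is meaningless. In that regime you are effectively enumerating all of $\{-1,0,1\}^{n/4}$ per part, and finding a zero $4$-sum among those full lists is not an $O^*(M)$-time task. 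The paper's proof takes a completely different route: it stays with subsets (not signed vectors), uses an averaging argument to find a frequency level $h=2^j+1$ with $\Omega^*(\Delta/h)$ values $t$ having $f_t\ge h$, then hashes all $2^n$ subsets into $p$ buckets via a mod-$p$ dynamic program, picks a random bucket, and subsamples within it at a carefully chosen rate. The DP is what makes the bucket contents efficiently enumerable without ever materialising $2^n$ objects, and the level-set decomposition is exactly what handles the ``few $\delta$'s but large $d$'' obstruction that breaks your approach.
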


Combining these two lemmas implies our main result:
\begin{proof}[Proof of \cref{thm:main}]
Set $\Delta = 2^{0.8n}$  so that the two time bounds in \cref{thm:smalldelta} and \cref{thm:larged} are balanced to $O^*(2^{0.4n})$.
Given an instance of \pigeon (with unknown $d$), we run both algorithms in parallel, and return the answer of whichever terminates first.
\end{proof}

\subsection{Small \texorpdfstring{$d$}{d} case via structural characterization}
\label{sec:smalld}
In this section we prove \cref{thm:smalldelta}. Assume $d\le \Delta\le 2^n/(3n^2)$ and $\Delta$ is known.

Since $f_t = 0$ for all $w([n]) <t<2^{n}$, from \cref{eqn:zeros} we know $d\ge 2^n -1 - w([n])$, and hence $w([n]) \ge 2^n-1-d \ge 2^n-1-\Delta$. Combined with \cref{eqn:lb} for $i\in [n-1]$, we get the following lower bound
  \begin{equation}
     \label{eqn:lbn}
   w([i])\ge 2^{i} -1- \Delta \text{ for all } i\in [n].
  \end{equation}

The key step is to complement \cref{eqn:lbn} with a nearly matching upper bound:
\begin{lemma}
  For all $i\in [n]$, 
  \begin{equation}
     \label{eqn:tempub}
  w_i\le 2^{i-1}  + \Delta.
  \end{equation}
\end{lemma}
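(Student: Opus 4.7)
The plan is to contrapose the desired bound: show that if $w_i$ were larger than $2^{i-1}+\Delta$, then too many integers in the range $\{0,1,\dots,w_i-1\}$ would fail to be subset sums, forcing $d$ to exceed $\Delta$ via the identity \cref{eqn:zeros}. The key observation is that the sorted order $w_1<w_2<\dots<w_n$ tightly restricts which subsets can have sum strictly below $w_i$.

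Specifically, fix $i\in[n]$ and consider any $S\subseteq[n]$ with $w(S)<w_i$. If $S$ contained some index $j\ge i$, then $w(S)\ge w_j\ge w_i$, a contradiction; hence $S\subseteq[i-1]$. Therefore the set of distinct subset sums lying in $\{0,1,\dots,w_i-1\}$ has size at most $|2^{[i-1]}|=2^{i-1}$. Since this range contains exactly $w_i$ integers, at least $w_i-2^{i-1}$ of them are non-subset-sums, and all of them lie in $\{0,1,\dots,2^n-1\}$ because $w_i\le w([n])<2^n-1$. Applying \cref{eqn:zeros} then yields $d\ge w_i-2^{i-1}$, and rearranging using $d\le\Delta$ gives $w_i\le 2^{i-1}+\Delta$, as claimed.

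There is no real obstacle: the argument is a short pigeonhole-style counting step, and the only ingredient beyond the hypothesis $d\le\Delta$ is the elementary observation that ``subsets with sum below $w_i$ live in $[i-1]$.'' The edge case $i=1$ is handled uniformly, with the empty subset contributing the single sum $0$ to the range $\{0,\dots,w_1-1\}$, so the bound becomes $w_1\le 1+\Delta$, consistent with the statement.
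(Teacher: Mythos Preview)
Your proof is correct and follows essentially the same route as the paper: both arguments observe that any $S\subseteq[n]$ with $w(S)<w_i$ must satisfy $S\subseteq[i-1]$, deduce that the interval $\{0,\dots,w_i-1\}$ contains at least $w_i-2^{i-1}$ non-subset-sums, and then invoke \cref{eqn:zeros} to get $d\ge w_i-2^{i-1}$. The only cosmetic difference is that the paper phrases the count via $M=\sum_{t<w_i}f_t$ and bounds $M$ from above and below, whereas you count distinct subset sums directly; the content is identical.
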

Summing \cref{eqn:tempub} over $i$ gives
\begin{equation}
   \label{eqn:ub}
 w([i])\le 2^{i} - 1 + i\Delta
\end{equation}
for all $i\in [n]$.
\begin{proof}
Fix $i\in [n]$.   Let $M$
    be the number of subsets $S\subseteq [n]$ with $w(S)<w_i$.
 Since $w_i<w_{i+1}<\dots<w_n$, any such $S$ must be contained in $[i-1]$, and thus $M\le 2^{i-1}$. On the other hand, 
$M= \sum_{t=0}^{w_i-1}f_t \ge   w_i - \#\{0\le t<w_i: f_t=0\} \ge w_i - d$ by \cref{eqn:zeros}. Hence, $w_i\le M+d \le 2^{i-1}+\Delta$. \end{proof}

Comparing  \cref{eqn:lbn} with \cref{eqn:ub} gives the lower bound
   \begin{align*}
      w_i = w([i]) - w([i-1])
 \ge (2^i-1-\Delta) - (2^{i-1} - 1 +(i-1)\Delta) 
 = 2^{i-1}-i\Delta,
   \end{align*}
   which is very close to the upper bound from \cref{eqn:tempub}. Together we get
  \begin{equation}
     \label{eqn:precise}
 w_i - 2^{i-1} \in [-i\Delta, \Delta]    \end{equation}
for all $i\in[n]$.

\cref{eqn:precise} gives a very rigid structure of the large input numbers. In the next lemma we exploit this structure to improve the naive meet-in-middle subset sum counting algorithm  from \cref{lem:mim}.

\begin{lemma}
   \label{lem:query}
For any given $T<2^n$, we can compute $\sum_{t=0}^T f_t$ in $O^*(\sqrt{\Delta})$ time.
\end{lemma}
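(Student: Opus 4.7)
The plan is to exploit the rigid structure \cref{eqn:precise}: $w_i = 2^{i-1} + \delta_i$ with $\delta_i \in [-i\Delta, \Delta]$. Introduce the bijection $v : 2^{[n]} \to \{0, \dots, 2^n - 1\}$ defined by $v(S) = \sum_{i \in S} 2^{i-1}$, and the perturbation $g(S) := w(S) - v(S) = \sum_{i \in S} \delta_i$, which satisfies $|g(S)| \le B := \sum_{i=1}^n i\Delta = O(n^2\Delta)$. Thus every subset sum $w(S)$ lies in a band of width $2B$ around the binary-index value $v(S)$; this is the structural fact I will leverage.

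Pick a threshold $k \approx \log_2(n^2\Delta) + O(1)$, which is at most $n-1$ thanks to the hypothesis $\Delta \le 2^n/(3n^2)$, and split $[n] = L \sqcup H$ with $L = [k]$ and $H = \{k+1, \dots, n\}$. Every subset decomposes uniquely as $S = X \sqcup Y$ with $X \subseteq L$, $Y \subseteq H$, and crucially $v(Y)$ is always a multiple of $2^k$ while $v(X) \in \{0, \dots, 2^k - 1\}$. Let $M := w(L)$; by \cref{eqn:ub}, $M \le 2^k - 1 + k\Delta = O(2^k)$. For a fixed $Y$, the inner count $\#\{X \subseteq L : w(X) \le T - w(Y)\}$ falls into three cases depending on $w(Y)$: it equals $0$ if $w(Y) > T$, equals $2^k$ if $w(Y) \le T - M$, and otherwise ($T - M < w(Y) \le T$) I call $Y$ \emph{boundary} and compute its inner count by applying \cref{lem:mim} on the $k$-element set $L$ in $O^*(\sqrt{2^k}) = O^*(\sqrt{\Delta})$ time.

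The key quantitative observation is that boundary $Y$'s are scarce: $T - M < w(Y) \le T$ together with $|g(Y)| \le B$ forces $v(Y) \in (T - M - B,\, T + B]$, an interval of length $M + 2B = O(2^k)$, which contains only $O(1)$ multiples of $2^k$. The algorithm thus enumerates these $O(1)$ candidate values of $v(Y)$, reconstructs each $Y$ and computes $w(Y)$ in $\poly(n)$ time, and invokes meet-in-middle on $L$ for the genuinely boundary ones. The ``full'' $Y$'s with $w(Y) \le T - M$ are counted in closed form as the number of multiples of $2^k$ in $[0, T - M - B] \cap [0, 2^n - 2^k]$, each contributing $2^k$ to the total; the ``empty'' $Y$'s with $w(Y) > T$ contribute $0$. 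Summing all contributions gives $\sum_{t=0}^T f_t$ within the claimed $O^*(\sqrt{\Delta})$ budget. The main technical care is calibrating $k$ so that $M + 2B = O(2^k)$ (keeping the boundary $O(1)$-sized) while simultaneously $2^k = O(n^2\Delta)$ (keeping meet-in-middle on $L$ fast enough); both are achievable precisely because the perturbation bound $B$ and the prefix sum $M$ are both $O(n^2 \Delta)$.
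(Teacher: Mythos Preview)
Your proposal is correct and follows essentially the same approach as the paper: split $[n]$ into a low block $L=[k]$ and a high block $H$ with $2^k=\Theta(n^2\Delta)$, use the rigidity \cref{eqn:precise} to conclude that $v(Y)=\sum_{j\in Y}2^{j-1}$ approximates $w(Y)$ to within $O(n^2\Delta)$, classify the $Y$'s into guaranteed-full, guaranteed-empty, and an $O(1)$-sized boundary set determined by which multiples of $2^k$ fall in a short interval around $T$, and handle the boundary $Y$'s by meet-in-middle on $L$ in $O^*(2^{k/2})=O^*(\sqrt{\Delta})$ time. The paper's proof is structured identically (its Cases~1--3 are exactly your empty/full/boundary trichotomy), with only cosmetic differences in notation and in the precise constants used for the interval endpoints.
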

\begin{proof}
Let $i^*$ be the minimum $i^*\in [n]$ such that 
$ 2^{i^*}\ge 3n^2\Delta$,
which exists by our assumption $\Delta\le 2^n/(3n^2)$.
Let $A =\{1,2,\dots,i^*\}$ and $B = \{i^*+1,\dots,n\}$.

By \cref{eqn:ub},
$w(A) < 2^{i^*}  + n\Delta$.

For every $B'\subseteq B$, by \cref{eqn:precise} we have
\begin{align*}
   \big \lvert w(B') - \sum_{j\in B'}2^{j-1}\big \rvert \le \sum_{j\in B'}|w_j - 2^{j-1}|
    \le \sum_{j\in B'} j\Delta 
    \le n^2\Delta.
\end{align*}
In other words, the subset sums of $\{w_j\}_{j\in B}$ are $n^2\Delta$-additively approximated by the subset sums of $\{2^{j-1}\}_{j\in B}$. The subset sums of the latter set  form an arithmetic progression $\{k\cdot 2^{i^*} : 0\le k< 2^{n-i^*}\}$, namely all $n$-bit binary numbers whose lowest $i^*$ bits are zeros. Notably, this arithmetic progression is very sparse: its difference $2^{i^*}$ is large enough compared to $w(A)<2^{i^*}+n\Delta$.

Given query $T$, we want to count the number of pairs $(A',B')$ $(A'\subseteq A, B'\subseteq B)$ such that $w(A')+w(B')\le T$.
To do this, we enumerate $B'\subseteq B$, and consider three cases (the non-trivial case is Case 3, where $w(B')$ and $\sum_{j\in B'}2^{j-1}$ are close to $T$):
\begin{itemize}
   \item \textbf{Case 1:} $\sum_{j\in B'}2^{j-1}\le T - 2^{i^*}-(n+n^2)\Delta$.

   Then,  for all $A'\subseteq A$, we have $w(A')+w(B') \le w(A)+w(B') \le (2^{i^*}+n\Delta) + (n^2\Delta + \sum_{j\in B'}2^{j-1}) \le  T$. Hence $B'$ contributes $2^{|A|}$ many pairs $(A',B')$.

   \item \textbf{Case 2:} $\sum_{j\in B'}2^{j-1}> T + n^2\Delta$.

      Then, for all $A'\subseteq A$, we have $w(A')+w( B')\ge w(B') \ge \sum_{j\in B'}2^{j-1} - n^2\Delta >T$.
Hence $B'$ does not contribute any pairs $(A',B')$. 
\item \textbf{Case 3:} otherwise, $\sum_{j\in B'}2^{j-1} \in (T - 2^{i^*}-(n+n^2)\Delta, T + n^2\Delta]$.

This interval has length $2^{i^*} + (n+n^2)\Delta+n^2\Delta \le 2\cdot 2^{i^*}$ by our choice of $i^*$. Since $\sum_{j\in B'}2^{j-1}$ is a multiple of $2^{i^*}$ in this interval, it has at most two possibilities, namely $2^{i^*}\cdot \lfloor \frac{T -(n+n^2)\Delta}{2^{i^*}}\rfloor$ and $2^{i^*}\cdot\left (\lfloor \frac{T -(n+n^2)\Delta}{2^{i^*}}\rfloor+1\right )$, and then $B'$ is uniquely determined by the binary decomposition of $\sum_{j\in B'}2^{j-1}$. For each possible $B'$, we count the number of $A'\subseteq A$ such that $w(A') \le T - w(B')$ using meet-in-middle (\cref{lem:mim}) with time complexity $O^*(2^{|A|/2}) = O^*(2^{i^*/2}) = O^*(\sqrt{\Delta})$ by the definition of $i^*$.
\end{itemize}

Note that in $O^*(1)$ time we can easily find the (at most two) subsets $B'$ satisfying Case~3, and also count the total contribution of Case~1. Hence the overall time complexity is $O^*(\sqrt{\Delta})$.
\end{proof}
Using \cref{lem:query} we can solve \pigeon using binary search, in the same way as described in the last paragraph of \cref{sec:prelim}. The running time is $O^*(\sqrt{\Delta})$. This finishes the proof of \cref{thm:smalldelta}.

\subsection{Large \texorpdfstring{$d$}{d} case via subsampling}
In this section we prove \cref{thm:larged}. 
Assume $2^{n/2} \le \Delta\le d<2^n$, and $\Delta$ is known. 
We first use
 $d = \sum_{0\le t< 2^n} \max\{0,f_t-1\}$ (\cref{eqn:defndel}) to show that many subset sums $t$ have large $f_t$, which then allows us to use subsampling to speed up the modular dynamic programming approach of \cite{esa22,AKKN16}.
\begin{lemma}
   \label{lem:existsj}
There exists  a $ j \in \{0,1,\dots, n-1\}$ such that 
$\# \{t : f_t > 2^{j} \} > \frac{\Delta}{2^{j+1}n}$.
\end{lemma}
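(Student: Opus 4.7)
The plan is to prove the statement by contradiction, using a dyadic layer-cake decomposition of $d$. Suppose for contradiction that $\#\{t : f_t > 2^j\} \le \frac{\Delta}{2^{j+1} n}$ for every $j \in \{0, 1, \dots, n-1\}$; write $N_j := \#\{t : f_t > 2^j\}$ for brevity. I will derive the bound $d \le \Delta/2$, contradicting the assumption $d \ge \Delta$.

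First, rewrite $d$ via layer cake. Using \cref{eqn:defndel} and the identity $\max\{0, f_t - 1\} = \sum_{k \ge 2} \mathbf{1}[f_t \ge k]$, swapping the order of summation gives
\[
   d \;=\; \sum_{k \ge 2} \#\{t : f_t \ge k\}.
\]
Since each $f_t \le 2^n$ (there are only $2^n$ subsets), the inner count is zero for $k > 2^n$, so the sum over $k$ runs from $2$ to $2^n$.

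Next, group the values of $k$ dyadically into the ranges $k \in [2^j + 1,\ 2^{j+1}]$ for $j = 0, 1, \dots, n-1$, which exactly covers $k \in \{2, 3, \dots, 2^n\}$. Since $\#\{t : f_t \ge k\}$ is non-increasing in $k$, each block of length $2^j$ is bounded by its leftmost term $\#\{t : f_t \ge 2^j + 1\} = \#\{t : f_t > 2^j\} = N_j$. This yields
\[
   d \;\le\; \sum_{j=0}^{n-1} 2^j \cdot N_j.
\]

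Plugging in the assumed upper bound on $N_j$ gives
\[
   d \;\le\; \sum_{j=0}^{n-1} 2^j \cdot \frac{\Delta}{2^{j+1} n} \;=\; \sum_{j=0}^{n-1} \frac{\Delta}{2n} \;=\; \frac{\Delta}{2},
\]
which contradicts $d \ge \Delta$. Hence some $j \in \{0, \dots, n-1\}$ must satisfy $N_j > \frac{\Delta}{2^{j+1} n}$, as required. There is no real obstacle here; the only care needed is in the indexing of the dyadic blocks and in observing that $f_t \le 2^n$ so that $j$ need only range up to $n-1$.
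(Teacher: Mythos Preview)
Your proof is correct and follows essentially the same approach as the paper: a contradiction argument via a dyadic decomposition of $d$. The paper partitions contributions by the dyadic range $2^j < f_t \le 2^{j+1}$ and bounds $f_t-1 \le 2^{j+1}-1$, whereas you use the equivalent layer-cake formulation $d=\sum_{k\ge 2}\#\{t:f_t\ge k\}$ and group the levels $k$ into dyadic blocks; both routes yield the same contradiction with $d\ge \Delta$.
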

\begin{proof}
By definition of $d$ in \cref{eqn:defndel},
\begin{equation}
   \label{eqn:temptemp}
 \Delta\le d = \sum_{t: f_t>1}(f_t-1) \le  \sum_{0\le j< n} \# \{t : 2^{j} < f_t \le 2^{j+1} \}\cdot (2^{j+1}-1) .
\end{equation}
 If the claimed inequality fails for all $j$, then
 \[ [\text{RHS of \cref{eqn:temptemp}}] 
   \le \sum_{0\le j< n} \frac{\Delta}{2^{j+1}n} \cdot (2^{j+1}-1) <  \Delta,\]
a contradiction.
\end{proof}

Our algorithm enumerates all 
$ j \in \{0,1,\dots, n-1\}$ (increasing the time complexity by a factor of $n = O^*(1)$), and from now on we assume $j$ satisfies the inequality in  \cref{lem:existsj}.
Define
\begin{equation}
  h := 2^j+1\ge 2,\;\; m:= \left \lceil \frac{\Delta}{2^{j+1}n} \right \rceil > \frac{\Delta}{2hn} ,\; \text{ and } X := \{t\in [2^n] : f_t \ge h\}. \label{eqn:defnhmx}
\end{equation}
Here we defined the set $X$ of frequent subset sums only for the sake of analysis. By \cref{lem:existsj},~
\begin{equation}
\label{eqn:xgem}
 |X|\ge m.   
\end{equation}
 Readers are encouraged to focus on the case of $h=2$ and $m\ge \Omega^*(\Delta)$ (which is the hardest case for our algorithm) at first read.

We first describe the behavior of our algorithm:  Let $p\in [P,2P]$ be a uniformly random prime (for some parameter $2\le P\le 2m$ to be determined later in the ``Time complexity'' paragraph).
For each $r\in \Z_p$, define bin $B_r:= \{S\subseteq [n]: w(S)\equiv r\pmod{p}\}$. 
The algorithm picks a random bin index $r^*\in \Z_p$, and subsamples $C\subseteq B_{r^*}$ by keeping each $S\in B_{r^*}$ with probability $\alpha$ independently (for some $0<\alpha\le \frac{1}{2h}$ to be determined later in the ``Success probability'' paragraph).
Finally, a pair of distinct $S,S'\in C$ with $w(S)=w(S')$ is reported (if exists).

Now we explain how to implement the algorithm above via dynamic programming (DP) similarly to \cite{esa22,AKKN16}. Build the DP table $D_{i,r} = \#\{S \subseteq  [i] : w(S) \equiv r \pmod{p}\}$ (where $0\le i \le n$ and $r\in \Z_p$)  in $O^*(p)$ overall time via the transition $D_{i,r} = D_{i-1,r}+D_{i-1, (r-w_i)\bmod p}$ with initial values $D_{0,r} = \textbf{1}[r = 0]$. This DP computes the size of every bin $|B_r| = D_{n,r}$.  
Furthermore, for any bin $B_r$ and integer $k\in [|B_r|]$, we can report the rank-$k$ set $S$ in $B_r$  (in lexicographical order, where larger indices are compared first) by backtracing in the DP table in $O^*(1)$ time. Then,
in order to subsample a collection of sets $C\subseteq B_{r^*}$ at rate $\alpha$, we can first subsample their ranks in $[|B_{r^*}|]$ (in near-linear time in the output size, see e.g., \cite{BringmannP17sample}), and then recover the actual sets by backtracing.

\paragraph*{Success probability}
We study how the frequent subset sums, $X=\{t:f_t\ge h\}$, are distributed to the bins modulo a random prime $p$, using an argument similar to 
\cite{AKKN16}.
Setting
\begin{equation}
\label{eqn:defnk}
  k:=\big \lceil \frac{m}{4P} \big\rceil, 
\end{equation}
  the following lemma shows that the bin $B_{r^*}$ receives at least $k$ frequent subset sums, with $\Omega^*(1)$ probability.
\begin{lemma}
\label{lem:frequentsums}
    With at least $\Omega(1/n)$ probability over the choice of prime $p\in [P,2P]$ and $r^*\in \Z_p$, 
    there are at least $k$ integers $t\in \N$ such that $\#\{S\in B_{r^*} : w(S) = t\} \ge h$. 
\end{lemma}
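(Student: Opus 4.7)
The plan is to apply the Paley--Zygmund (second-moment) inequality to the random variable $N_{r^*} := \#\{t\in X : t\equiv r^*\pmod{p}\}$ over the joint draw of the prime $p\in[P,2P]$ and the residue $r^*\in\Z_p$. If $t\in X$ and $t\equiv r^*\pmod{p}$, then every set $S$ with $w(S)=t$ (of which there are at least $h$, since $f_t\ge h$ by definition of $X$) automatically lies in $B_{r^*}$, so $N_{r^*}$ lower-bounds the quantity the lemma controls. Thus it suffices to prove $\Pr[N_{r^*}\ge k]\ge \Omega(1/n)$. The first moment is immediate: since $p\le 2P$ and $|X|\ge m$ by \eqref{eqn:xgem},
\[
 \mu \;:=\; \Ex[N_{r^*}] \;=\; \Ex_p\!\left[\tfrac{|X|}{p}\right] \;\ge\; \tfrac{m}{2P}.
\]

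For the second moment I would use the following collision bound: for any integer $y$ with $1\le y<2^n$, a uniformly random prime $p\in[P,2P]$ satisfies $\Pr_p[p\mid y]=O(n/P)$. The reason is that any $\ell$ distinct primes in $[P,2P]$ dividing $y$ have product at most $y<2^n$, so at most $n/\log_2 P$ such primes exist, whereas $[P,2P]$ contains $\Omega(P/\log P)$ primes by standard prime-counting estimates; the ratio is $O(n/P)$. Writing $\sum_{r\in\Z_p} N_r^2 = |X|+C_p$, where $C_p$ counts ordered pairs $(t_1,t_2)\in X^2$ with $t_1\ne t_2$ and $p\mid (t_1-t_2)$, this bound yields $\Ex_p[C_p]\le |X|^2\cdot O(n/P)$, and hence
\[
 \Ex[N_{r^*}^2] \;=\; \Ex_p\!\left[\tfrac{|X|+C_p}{p}\right] \;\le\; \tfrac{|X|}{P} + O\!\left(\tfrac{|X|^2 n}{P^2}\right).
\]
The constraint $P\le 2m\le 2|X|$ then implies $\Ex[N_{r^*}^2]/\mu^2 \le 4P/|X| + O(n) = O(n)$, so Paley--Zygmund delivers
\[
 \Pr[N_{r^*}\ge \mu/2] \;\ge\; \tfrac{\mu^2}{4\,\Ex[N_{r^*}^2]} \;\ge\; \Omega(1/n).
\]
Since $\mu/2\ge m/(4P)$ and $N_{r^*}$ is integer-valued, the event $\{N_{r^*}\ge \mu/2\}$ is contained in $\{N_{r^*}\ge \lceil m/(4P)\rceil\} = \{N_{r^*}\ge k\}$, completing the argument.

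The step I expect to be the most delicate is pinning down the collision bound $\Pr_p[p\mid y]=O(n/P)$ with no spurious $\log$ factor. The crude estimate ``$y<2^n$ has at most $n$ prime divisors, and $[P,2P]$ contains $\Omega(P/\log P)$ primes'' only gives $O(n\log P/P)$, which would degrade the final success probability to $\Omega(1/(n\log P))$. The key observation is that only primes actually lying in $[P,2P]$ may be counted in the numerator, and at most $n/\log_2 P$ of those can divide $y$; this cancels the extra $\log P$ and is exactly what lets the second-moment calculation land at the claimed $\Omega(1/n)$.
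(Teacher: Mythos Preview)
Your proof is correct and rests on the same core ingredient as the paper's: the prime-collision bound $\Pr_p[p\mid y]=O(n/P)$ for $0<|y|<2^n$, followed by a second-moment argument. The packaging differs slightly. The paper proceeds in two stages: it first restricts to a subset $X'\subseteq X$ of size exactly $m$, bounds $\Ex_p[\sum_r c_{r,p}^2]\le O(nm^2/P)$, applies Markov over $p$ to fix a good prime, and then uses Cauchy--Schwarz over residues to count how many $r\in\Z_p$ satisfy $c_{r,p}\ge m/(2p)$. You instead apply Paley--Zygmund directly to $N_{r^*}$ over the joint randomness of $(p,r^*)$, which is a cleaner one-shot version of the same second-moment computation. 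Both arrive at $\Omega(1/n)$ with the same dependence on parameters; your route is arguably more streamlined, while the paper's two-stage version makes the ``good prime'' event explicit.
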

\begin{proof}
Since $|X|\ge m$ by \cref{eqn:xgem}, we arbitrarily pick $X'\subseteq X$ with $|X'|=m$ for the sake of analysis.
Let $c_{r,p}:= \{t\in X': t \equiv r\pmod{p}\}$. 
   Then, 
   \begin{align*}
\Ex_{p\in [P,2P]} \big [\sum_{r\in \Z_p} c_{r,p}^2\big ] &=  \sum_{x\in X',y\in X'}\Pr_{p\in [P,2P]}[p\mid x-y] \\
& \le m + m^2\cdot \frac{\log_P 2^{n}}{\Omega(P/\ln P)} \tag{by $|x-y|\le 2^n$ and the density of primes}
\\ 
& \le O(n\cdot m^2/P).  \tag{by the assumption that $P\le 2m$}
   \end{align*}
Then by Markov's inequality, with $0.9$ success probability over the choice of $p$, we have $\sum_{r\in \Z_p} c_{r,p}^2 \le O(n\cdot m^2/P)$.
Conditioned on this happening, by Cauchy--Schwarz inequality we have 
\begin{align*}
\sum_{r\in \Z_p} \mathbf{1}[c_{r,p}\ge \tfrac{m}{2p}] & \ge \frac{\left (\sum_{r\in \Z_p}\mathbf{1}[c_{r,p}\ge \tfrac{m}{2p}]\cdot c_{r,p}\right )^2}{\sum_{r\in\Z_p}c_{r,p}^2} \\
& \ge  \frac{\big ((\sum_{r\in \Z_p}c_{r,p}) - p\cdot \tfrac{m}{2p}\big )^2}{O(n\cdot m^2/P)}=\frac{(|X'|-m/2)^2}{O(n\cdot m^2/P)} =\frac{(m/2)^2}{O(n\cdot m^2/P)}= \Omega(P/n),
\end{align*}
and hence, by our choice of  $k=\big \lceil \frac{m}{4P} \big\rceil \le \big \lceil \frac{m}{2p} \big\rceil$,
\[\Pr_{r^*\in \Z_p} [c_{r^*,p}\ge  k]\,\ge\, \Pr_{r^*\in \Z_p} [c_{r^*,p}\ge  \tfrac{m}{2p}] \,\ge\, \frac{\Omega(P/n)}{p} = \Omega(1/n).\]
  Conditioned on $c_{r^*,p}\ge  k$ happening,  we have at least $k$ integers $t\in X'\subseteq X$ such that $t\equiv r^* \pmod{p}$. By definitions of $B_{r^*}$ and $X$, this implies that there are at least $k$ integers $t\in \N$ such that $\#\{S\in B_{r^*} : w(S) = t\} \ge h$, with overall success probability at least $0.9\cdot \Omega(1/n) = \Omega(1/n)$ over the choice of $p$ and $r^*$.
\end{proof}

  Recall our algorithm subsamples $C\subseteq B_{r^*}$ at rate $\alpha\in (0,\frac{1}{2h}]$, and fails iff $w(S)$ are distinct for all $S\in C$. The failure probability of this step can be derived from the following lemma:
  \begin{lemma}
     \label{lem:balls}
     Let $B'$ be a collection of $kh$ colored balls $(h\ge 2, k\ge 1)$, with exactly $h$ balls of color $i$ for each  color $i\in [k]$.
     Let $C' \subseteq B'$ be an i.i.d.\ subsample at rate $\alpha \in [0, \frac{1}{2h}]$. Then $C'$ contains distinct colors with at most $\exp(-kh(h-1)\alpha^2/4)$ probability.
  \end{lemma}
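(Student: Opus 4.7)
} The plan is to compute the probability exactly color-by-color, then bound the resulting product via a log/Taylor expansion, using the hypothesis $\alpha \le \tfrac{1}{2h}$ to control the higher-order terms.

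First, since balls are subsampled independently, the restrictions $C' \cap (\text{color } i)$ are independent across $i \in [k]$. For a fixed color $i$, the number of sampled balls $X_i$ is $\mathrm{Binomial}(h, \alpha)$, so
\[
\Pr[X_i \le 1] \;=\; (1-\alpha)^h + h\alpha(1-\alpha)^{h-1} \;=\; (1-\alpha)^{h-1}\bigl(1 + (h-1)\alpha\bigr) \;=:\; q.
\]
``$C'$ has distinct colors'' is exactly the event $\bigcap_i \{X_i \le 1\}$, so by independence $\Pr[C' \text{ distinct}] = q^k$. It then suffices to show $\ln q \le -h(h-1)\alpha^2/4$.

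Second, I would expand $\ln q = (h-1)\ln(1-\alpha) + \ln(1+(h-1)\alpha)$. Using the standard bounds $\ln(1-\alpha) \le -\alpha - \alpha^2/2$ and $\ln(1+\beta) \le \beta - \beta^2/2 + \beta^3/3$ for $\beta := (h-1)\alpha > 0$, the $\pm(h-1)\alpha$ terms cancel, leaving
\[
\ln q \;\le\; -\frac{(h-1)\alpha^2}{2} - \frac{(h-1)^2\alpha^2}{2} + \frac{(h-1)^3\alpha^3}{3} \;=\; -\frac{h(h-1)\alpha^2}{2} + \frac{(h-1)^3\alpha^3}{3}.
\]

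Third, I would absorb the cubic error using $\alpha \le \tfrac{1}{2h}$. The task reduces to checking $(h-1)^3\alpha^3/3 \le h(h-1)\alpha^2/4$, i.e.\ $\alpha \le \tfrac{3h}{4(h-1)^2}$. Since $h \ge 2$, we have $\tfrac{3h}{4(h-1)^2} \ge \tfrac{3}{4h} \ge \tfrac{1}{2h} \ge \alpha$, so the error is swallowed and $\ln q \le -h(h-1)\alpha^2/4$, completing the bound $q^k \le \exp(-kh(h-1)\alpha^2/4)$.

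The only mild obstacle is getting the constants right: the naive bounds $(1-\alpha)^{h-1} \le e^{-(h-1)\alpha}$ and $1+\beta \le e^\beta$ just give $q \le 1$, so one must keep the quadratic term from $\ln(1-\alpha)$ to produce the $-h(h-1)\alpha^2/2$ factor, and then verify that the Taylor cubic remainder from $\ln(1+\beta)$ is dominated under the explicit regime $\alpha \le 1/(2h)$. Once these two Taylor estimates are in hand, the rest is arithmetic.
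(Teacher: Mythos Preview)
Your proof is correct. Both arguments use the per-color independence to reduce to bounding $q=\Pr[X_i\le 1]$, but the single-color estimates differ. You work with the exact expression $q=(1-\alpha)^{h-1}(1+(h-1)\alpha)$ and bound $\ln q$ via second/third-order Taylor inequalities, then use $\alpha\le 1/(2h)$ to absorb the cubic remainder. The paper instead lower-bounds the complementary event by keeping only the ``exactly two'' term, $\binom{h}{2}\alpha^2(1-\alpha)^{h-2}\ge \binom{h}{2}\alpha^2/2$ (Bernoulli's inequality with $\alpha\le 1/(2h)$), and then applies $1-x\le e^{-x}$. Your route computes the probability more precisely before estimating, while the paper's route is shorter and avoids any Taylor remainder bookkeeping; both yield the same constant $kh(h-1)\alpha^2/4$ in the exponent.
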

  \begin{proof}
  For each color $i\in [k]$, by Bernoulli's inequality,
 the probability that $C'$ includes exactly two balls of color~$i$ is $\binom{h}{2}\alpha^2(1-\alpha)^{h-2} \ge \binom{h}{2}\alpha^2\big (1-(h-2)\alpha\big ) \ge \binom{h}{2}\alpha^2/2$.
Hence, the probability that $C'$ includes at most one ball of every  color $i\in [k]$ is at most 
   $ \big ( 1 - \binom{h}{2}\alpha^2/2\big )^{k} \le \exp\big (-k\binom{h}{2}\alpha^2/2\big ) =  \exp(-kh(h-1)\alpha^2/4)$.
  \end{proof}
 We think of each set $S\in B_{r^*}$ as a ball of color $w(S)$,
 and apply \cref{lem:balls} to the 
  $k$ integers (colors) $t\in \N$ ensured by \cref{lem:frequentsums},  each having at least $h$ sets (balls) $S\in B_{r^*}$ with $w(S) = t$. 
  We set the sample rate to be
  \begin{equation}
  \label{eqn:defnalpha}
  \alpha := \frac{1}{2h\sqrt{k}} \le \frac{1}{2h}.
  \end{equation}
  Then the failure probability of the subsampling step is at most
 \[ \exp (-kh(h-1)\alpha^2/4 )  = \exp(-\tfrac{h-1}{16h}) \le \exp(-1/32), \] 

  Overall, the probability that the algorithm successfully finds a solution is at least $\Omega(1/n) \cdot (1-\exp(-1/32)) \ge \Omega(n^{-1})$.

\paragraph*{Time complexity}
The mod-$p$ DP runs in $O^*(p)\le O^*(P)$ time.
Since the bins have total size $\sum_{r\in \Z_p}|B_r|=2^n$, the chosen bin $B_{r^*}$ has expected size $\Ex_{r^*\in \Z_p}[|B_{r^*}|] = 2^n/p \le 2^n/P$, and hence the subsample $C\subseteq B_{r^*}$ has expected size $\Ex[|C|] \le \alpha 2^n/P$. To detect a solution $S,S'\in C$ with $w(S)=w(S')$, we simply sort $C$  in near-linear time.  Hence the total expected running time is $O^*(P + \alpha 2^n/P)$. By Markov's inequality,
with probability at least $1-n^{-10}$, the algorithm terminates in $O^*(P + \alpha 2^n/P)$ time. By a union bound, the algorithm successfully finds a solution in time $O^*(P + \alpha 2^n/P)$ with probability at least $\Omega(n^{-1})-n^{-10}\ge \Omega(n^{-1})$. This success probability can be boosted to $0.99$ by repeating the algorithm $O(n)$ times.

Recall from \cref{eqn:defnk,eqn:defnalpha} that 
$ \alpha = \frac{1}{2h\sqrt{k}}  =\frac{1}{2h \sqrt{\lceil m/4P\rceil }} \le  \frac{\sqrt{P}}{h\sqrt{m}}$, 
so the run time is (ignoring $\poly(n)$ factors)
\[ P + \alpha 2^n/P \le  P + \frac{2^n }{h\sqrt{mP}}.\]

Recall $h=2^j+1$ (where $0\le j\le n-1$) and $m= \lceil \frac{\Delta}{2^{j+1}n} \rceil $, and hence $hm < h(1+  \frac{\Delta}{2^{j+1}n}) \le h+ \frac{\Delta}{n} < (2^{n-1}+1) + \frac{2^n}{n}\le 2^n$ (assuming $n\ge 3$).
Now we set 
\[P := 2m\cdot \min\left \{1, \big (\frac{2^n}{hm^2}\big )^{2/3}\right \},\] and we first need to verify the requirement $2\le P \le 2m$ introduced earlier: The upper bound is obvious. To see the lower bound, note that $2m\ge 2$ and 
$ 2m\cdot \left ( \frac{2^n}{hm^2}\right )^{2/3} =   2  \left (\frac{2^{2n}}{h^2m}\right )^{1/3} \ge 2  \left (\frac{2^{2n}}{(hm)^2}\right )^{1/3} \ge 2$ (using the inequality $hm\le 2^n$ we just showed).

Hence, the overall running time is at most (ignoring $\poly(n)$ factors)
\begin{align*}
   P + \frac{2^n }{h\sqrt{mP}}  &\le 2m\left (\frac{2^n}{hm^2}\right )^{2/3} + \frac{2^n}{h\sqrt{m\cdot 2m}}\cdot \max\left \{1 ,\left (\frac{hm^2}{2^n}\right )^{1/3} \right \}\\
   &  = 2\cdot \frac{2^{2n/3}}{h^{2/3}m^{1/3}}  +\frac{1}{\sqrt{2}}\max \left \{ \frac{2^n}{hm} , \frac{2^{2n/3}} {h^{2/3} m^{1/3}}\right \}\\
   & \le O\left (\frac{2^{2n/3}}{h^{2/3}m^{1/3}} + \frac{2^n}{hm}\right )\\
   & \le O^*\left (\frac{2^{2n/3}}{h^{1/3}\Delta^{1/3}} + \frac{2^n}{\Delta} \right ) \tag{by $hm>\frac{\Delta}{2n}$ from \cref{eqn:defnhmx}}\\
   & \le O^*\left (\frac{2^{2n/3}}{\Delta^{1/3}} \right ). \tag{by $h>1$ and the assumption that $\Delta\ge 2^{n/2}$}
\end{align*}
This finishes the proof of  \cref{thm:larged}.

\section{A polynomial-space algorithm}
\label{sec:poly}
We now consider $\poly(n)$-space algorithms for \pigeon. The straightforward binary search approach (described at the end of \cref{sec:prelim}) can be adapted to run in $O^*(2^n)$ time and $\poly(n)$ space: instead of using meet-in-middle (\cref{lem:mim}, which requires large space), we count the number of valid subsets $S\subseteq [n]$ by brute force in $O^*(2^n)$ time and only $\poly(n)$ space.

We improve this $O^*(2^n)$ running time using the ideas from earlier sections. Again, consider two cases depending on whether parameter $d$ from \cref{eqn:defndel} is small or large.

\begin{lemma}
   \label{lem:smalldeltalowspace}
   Given parameter $\Delta \le  2^n/(3n^2)$,  \pigeon with $d\le \Delta$ can be solved deterministically 
  in $\poly(n)$ space and $O^*(\Delta)$ time.
\end{lemma}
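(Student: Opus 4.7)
The plan is to follow the same structure as the proof of \cref{thm:smalldelta}, and replace the only step that uses super-polynomial space, namely the meet-in-middle call inside \cref{lem:query}, with a brute-force enumeration. The structural characterization \cref{eqn:precise}, $w_i - 2^{i-1} \in [-i\Delta, \Delta]$, is derived purely from counting arguments and uses only $\poly(n)$ space, so it carries over unchanged. We still split $[n] = A \uplus B$ with $A = \{1,\dots,i^*\}$, where $i^* \in [n]$ is the smallest index with $2^{i^*} \ge 3n^2 \Delta$, and we keep the same three-case analysis based on $\sum_{j \in B'} 2^{j-1}$.

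For a given query $T$, the first step is to count the contribution of Case~1. Because $\sum_{j \in B'} 2^{j-1}$ ranges over exactly the multiples of $2^{i^*}$ in $[0,2^n)$ (and is in bijection with $B' \subseteq B$ via binary decomposition), the number of $B'$ in Case~1 is a simple closed-form expression in $T$, $i^*$, $n$, and $\Delta$, computable in $\poly(n)$ time. Multiplying this count by $2^{|A|}$ gives their total contribution. Case~2 contributes zero. For Case~3, the interval of possible values of $\sum_{j \in B'} 2^{j-1}$ has length at most $2 \cdot 2^{i^*}$, so it contains at most two multiples of $2^{i^*}$, each uniquely determining a candidate $B'$, identifiable in $\poly(n)$ time.

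The only step that previously used large space is counting $\#\{A' \subseteq A : w(A') \le T - w(B')\}$ for each of the $O(1)$ Case~3 candidates. In polynomial space, I would replace \cref{lem:mim} by direct enumeration over all $2^{|A|} = 2^{i^*}$ subsets of $A$, incrementing a counter whenever $w(A') \le T - w(B')$. By the choice of $i^*$, this costs $O^*(2^{i^*}) = O^*(n^2 \Delta) = O^*(\Delta)$ time and only $\poly(n)$ space. Combining the three cases, we obtain a polynomial-space subroutine that computes $\sum_{t=0}^T f_t$ in $O^*(\Delta)$ time; moreover, once the binary search has shrunk the interval to a singleton $\ell = r$, the same brute-force enumeration produces two distinct witnesses $A',B'$ within the same time and space bounds.

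Finally, plugging this subroutine into the binary-search scheme from the end of \cref{sec:prelim} yields $n$ queries of cost $O^*(\Delta)$ each, for a total of $O^*(\Delta)$ time and $\poly(n)$ space. The main (minor) obstacle is just verifying that nothing else in the pipeline stores an exponentially large table, which is clear since the $B'$ enumeration for Cases~1--3 is done implicitly via closed-form arithmetic on binary representations. I therefore expect no technical surprises beyond the clean substitution of brute force for meet-in-middle.
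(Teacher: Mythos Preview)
Your proposal is correct and matches the paper's own proof essentially verbatim: the paper's proof sketch says to redo \cref{thm:smalldelta} with the single change that the meet-in-middle call in Case~3 of \cref{lem:query} is replaced by brute-force enumeration over all $2^{|A|}=2^{i^*}=O^*(\Delta)$ subsets. Your additional remarks about handling Cases~1--2 by closed-form arithmetic and about witness extraction at the end of the binary search are accurate and fill in details the paper leaves implicit.
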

\begin{proof}[Proof Sketch]
  The proof is almost the same as  \cref{thm:smalldelta} (see \cref{sec:smalld}), with the only difference in Case 3 from the proof of \cref{lem:query}: instead of using meet-in-middle, here we count the valid subsets $A'\subseteq A$ by brute force in $O^*(2^{|A|}) = O^*(2^{i^*}) = O^*(\Delta)$ time and only $\poly(n)$ space.
\end{proof}

To solve the large $d$ case, we need the low-space element distinctness algorithm by Beame, Clifford, and Machmouchi \cite{BCM13} (generalized in \cite{BansalGN018}, and with a non-standard assumption removed by \cite{ChenJWW22,xinlyu}).  
This algorithm was also previously used for Subset Sum \cite{BansalGN018} and Equal Sums \cite{esaequal19}.
The following statement can be inferred from \cite[Section 4.2 (proof of Theorem 1.1)]{ChenJWW22}.
\begin{theorem}[Low-space Element Distinctness, \cite{BCM13,BansalGN018,ChenJWW22}]
   \label{theorem:bcm}
  Given random access to an integer list $a_1,\dots,a_N$ (where $a_i\in [\poly(N)]$) that contains at least one pair $(i,j)\in [N]\times [N]$ with $a_i=a_j,i\neq j$, there is a randomized algorithm that reports such a pair using $\polylog N $ working memory and 
  \[  O\left (\frac{N\sqrt{F_2}}{F_2 - N} \cdot \polylog N \right )\]
  time, where $F_2 = \sum_{i=1}^N\sum_{j=1}^N \mathbf{1}[a_i=a_j] \in [N+2,N^2]$.\footnote{We have $F_2\ge N+2$ due to the following $(N+2)$ pairs: $(1,1),(2,2),\dots,(N,N)$ and $(i,j),(j,i)$, where $a_i=a_j$ ($i\neq j$).}
\end{theorem}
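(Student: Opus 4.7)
The plan is to follow the random-walk / Pollard-rho strategy of Beame, Clifford, and Machmouchi, as refined in the cited sequels. First, I would sample a random hash $h$ from a polylog-seed family mapping $[\poly(N)] \to [N]$ and define $g\colon [N] \to [N]$ by $g(i) = h(a_i)$. Any real pair $(i,j)$ with $a_i = a_j$ automatically satisfies $g(i) = g(j)$; a pair with $a_i \ne a_j$ collides under $g$ only spuriously, with probability roughly $1/N$ over $h$. The algorithm then picks a uniformly random start $x_0 \in [N]$ and iterates $x_{t+1} = g(x_t)$, running Brent's cycle-detection in $\polylog N$ memory. Any cycle returns two indices $i \ne j$ with $g(i) = g(j)$; we test whether $a_i = a_j$, and if so report the pair, otherwise retry with fresh $h$ and $x_0$.

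The key expectation calculation controls the quantity $D := \sum_{y \in [N]} |g^{-1}(y)|^2$. A direct computation gives $\Ex_h[D] = N + (F_2 - N) + O(N) = F_2 + O(N)$, where the three terms account for the diagonal, the true collisions (from $a_i = a_j$, $i \ne j$), and the spurious hash collisions, respectively. The standard birthday argument on the functional graph of $g$ then shows that the expected rho length (i.e.\ time per trial) is $\Theta(N/\sqrt{D}) = \Theta(N/\sqrt{F_2})$. Among the $\Theta(F_2)$ colliding ordered pairs revealed by the walk, $F_2 - N$ come from real duplicates and only $O(N)$ from hash noise, so each trial succeeds in finding a real collision with probability $\Omega((F_2 - N)/F_2)$. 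Multiplying the per-trial cost by the expected number of independent trials gives total time $O\bigl(N\sqrt{F_2}/(F_2 - N) \cdot \polylog N\bigr)$, matching the stated bound.

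The main obstacle is justifying that the rho walk on the functional graph of $g$ actually obeys this birthday bound despite $g$ not being a truly uniform random function: one must either take $h$ from a high-independence family or analyze the walk directly on the deterministic (in $a$) graph, and one must control the transient before the walk enters its cycle uniformly in the collision profile of $a$. The original BCM argument needed a regularity assumption on the input to push this concentration through, and the subsequent refinements \cite{BansalGN018,ChenJWW22,xinlyu} are devoted precisely to removing such assumptions. For our purposes I would simply invoke the statement as given, since a full reproof would duplicate that line of work.
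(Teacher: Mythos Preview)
The paper does not prove this theorem at all: it is stated as a black-box import, with the remark that it ``can be inferred from \cite[Section 4.2 (proof of Theorem 1.1)]{ChenJWW22}.'' Your closing suggestion --- to simply invoke the statement as given --- is therefore exactly what the paper does, and your Pollard-rho sketch, while a fair high-level summary of the BCM line of work, is not something the paper attempts to reproduce. The delicate step you flag (making the birthday estimate for the rho length rigorous when $h$ has only polylog seed length, and without regularity assumptions on the input multiset) is indeed the entire technical content of \cite{BCM13,BansalGN018,ChenJWW22,xinlyu}, and the present paper treats it as off-the-shelf.
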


\begin{lemma}
   \label{lem:largedlowspace}
Given parameter $1\le \Delta\le 2^n$, \pigeon with $d\ge \Delta$ can be solved in $O^*(2^{1.5n}/\Delta)$ time and $\poly(n)$ space by a randomized algorithm.
\end{lemma}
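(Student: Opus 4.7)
The plan is to directly invoke the low-space element distinctness algorithm (\cref{theorem:bcm}) on the length-$N$ list $a_S := w(S)$ indexed by all subsets $S \subseteq [n]$, where $N = 2^n$. Each index $S$ can be encoded as an $n$-bit string, so given a query index we can compute $a_S$ in $O(n)$ time using only $\poly(n)$ working memory; this gives the random access required by \cref{theorem:bcm}. The values $a_S$ lie in $[0, 2^n - 1] \subseteq [\poly(N)]$, so the precondition is satisfied. By the pigeonhole promise the list has at least one colliding pair, so \cref{theorem:bcm} is guaranteed to return some $S \neq S'$ with $w(S) = w(S')$, which is a valid Pigeonhole Equal Sums solution.

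It remains to bound the $F_2$-dependent runtime. By definition, $F_2 = \sum_{t} f_t^2$, so
\[
  F_2 - N = \sum_t f_t(f_t - 1) \;\ge\; \sum_{t:\,f_t \ge 1}(f_t - 1) \;=\; d \;\ge\; \Delta,
\]
using $d$ from \cref{eqn:defndel}. The runtime bound from \cref{theorem:bcm} is $\tilde O\bigl(N \sqrt{F_2}/(F_2 - N)\bigr)$. I would then observe that the function $x \mapsto N\sqrt{x}/(x-N)$ is strictly decreasing on $x > N$ (a one-line derivative check), so the worst case over $F_2 \ge N + \Delta$ is attained at $F_2 = N + \Delta$, giving
\[
  \frac{N\sqrt{F_2}}{F_2 - N} \;\le\; \frac{N\sqrt{N + \Delta}}{\Delta} \;\le\; \frac{N\sqrt{2N}}{\Delta} \;=\; O\!\left(\frac{2^{1.5n}}{\Delta}\right),
\]
where in the last step I use the hypothesis $\Delta \le 2^n = N$. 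This gives the claimed $O^*(2^{1.5n}/\Delta)$ time and $\poly(n)$ space.

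The main subtlety, rather than an obstacle, is just making sure that \cref{theorem:bcm}'s hypotheses fit: random access to an implicitly-defined list of length $2^n$ with values in a polynomial range. Once the monotonicity observation on $N\sqrt{F_2}/(F_2 - N)$ is in place, the argument is genuinely a one-shot reduction — there is no need for modular DP, subsampling, or case splits as in \cref{thm:larged}, because BCM's bound already degrades gracefully with the collision count $F_2 - N$, and our lower bound $F_2 - N \ge d \ge \Delta$ is exactly what is needed. This parallels how \cite{esaequal19} uses the same element-distinctness primitive for Equal Sums.
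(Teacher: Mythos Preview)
Your proposal is correct and essentially identical to the paper's proof: both apply \cref{theorem:bcm} directly to the list $\{w(S)\}_{S\subseteq[n]}$, lower-bound $F_2 - N$ by $d \ge \Delta$ via $\sum_t f_t(f_t-1)\ge \sum_t \max\{0,f_t-1\}$, and then bound $N\sqrt{F_2}/(F_2-N)$ by $O(2^{1.5n}/\Delta)$. The only cosmetic difference is that you invoke monotonicity of $x\mapsto N\sqrt{x}/(x-N)$ to plug in $F_2=N+\Delta$, whereas the paper uses $\sqrt{F_2}<F_2/\sqrt{N}$ and simplifies; both arrive at the same bound.
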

\begin{proof}
  Apply \cref{theorem:bcm}  to the  list $\{w(A)\}_{A\subseteq [n]}$ of length $N=2^n$ and we obtain a pair of distinct $A,A'\subseteq[n]$ with $w(A)=w(A')$ as desired. The space complexity is $\polylog(2^n) = \poly(n)$. To analyze the time complexity, note that 
  \[F_2-2^n =
  \sum_{A\subseteq [n]}\sum_{{\substack{B\subseteq [n]\\ B\neq A}}}\mathbf{1}[w(A)=w(B)] =
  \sum_{0\le t<2^n}f_t(f_t-1) \ge  \sum_{0\le t<2^n}\max\{0,f_t-1\} \stackrel{\text{Eq.~\eqref{eqn:defndel}}}{=}  d\ge \Delta,\]
  so the time bound is (ignoring $\poly(n)$ factors)
  \[  \frac{2^n\sqrt{F_2}}{F_2 - 2^n} < \frac{2^{0.5n}F_2}{F_2-2^n}  = 2^{0.5n}\left (1+ \frac{2^n}{F_2-2^n}\right )\le  2^{0.5n}\left (1+ \frac{2^n}{\Delta}\right ) \le  \frac{2\cdot 2^{1.5n}}{\Delta}\] 
  as claimed.
\end{proof}

Combining the two lemmas gives the desired result.
\begin{proof}[Proof of \cref{thm:mainlowspace}]
Set $\Delta = 2^{0.75n}$  so that the two time bounds in \cref{lem:smalldeltalowspace} and \cref{lem:largedlowspace} are balanced to $O^*(2^{0.75n})$.
Given an instance of \pigeon (with unknown $d$), we run both algorithms in parallel, and return the answer of whichever terminates first.
\end{proof}

\section{Open problems}
Allcock et al.\ \cite{esa22} proposed a modular variant of the \pigeon problem: given integers $w_1,\dots,w_n$ and a modulus $m\le 2^n-1$, find two distinct subsets $A,B\subseteq [n]$ such that $\sum_{i\in A}w_i\equiv \sum_{i\in B}w_i\pmod{m}$.  They obtained a $O^*(2^{n/2})$-time algorithm for this problem. Can this result be improved as well?  

Can we obtain faster algorithms for other problems in PPP (e.g., \cite{BanJPPR19,SotirakiZZ18})?

 \bibliography{main}
\end{document}